\newcommand{\abs}[1]{\ensuremath{\left| #1 \right| }}
\theoremstyle{plain}
\newtheorem{corollary}{Corollary}[section]
\newtheorem{lemma}[corollary]{Lemma}
\newtheorem{remark}[corollary]{Remark}
\newtheorem{theorem}[corollary]{Theorem}
\numberwithin{equation}{section}
\begin{document}
\title{MSE estimates for multitaper spectral estimation and off-grid compressive
sensing}
\author{Lu\'{\i}s Daniel Abreu}
\email{labreu@kfs.oeaw.ac.at}
\author{Jos\'{e} Luis Romero}
\email{jlromero@kfs.oeaw.ac.at}
\address{Acoustics Research Institute, Austrian Academy of Science,
Wohllebengasse 12-14 A-1040, Vienna, Austria}
\subjclass{}
\keywords{}
\thanks{L. D. A. was supported by the Austrian Science Fund (FWF)
START-project FLAME ("Frames and Linear Operators for Acoustical Modeling
and Parameter Estimation") Y 551-N13. J. L. R. gratefully acknowledges support
from the Austrian Science Fund (FWF): P 29462 - N35
and from a Marie Curie fellowship, within the 7th. European Community Framework program, under grant PIIF-GA-2012-327063.
The authors also acknowledge support from the
WWTF grant INSIGHT (MA16-053).}

\begin{abstract}
We obtain estimates for the Mean Squared Error (MSE) for the multitaper spectral estimator and
certain compressive acquisition methods for multi-band signals.
We confirm a fact discovered by Thomson
{[Spectrum estimation and harmonic analysis, Proc. IEEE, 1982]}: assuming bandwidth $W$
and $N$ time domain observations, the average of the square of the
first $K=\left\lfloor 2NW\right\rfloor$ Slepian functions
approaches, as $K$ grows, an ideal band-pass kernel for the interval
$\left[ -W,W\right]$. We provide an analytic proof of this fact and
measure the corresponding rate of convergence in the $L^{1}$ norm. This
validates a heuristic approximation used to control the MSE of the multitaper
estimator. The estimates have also consequences for the method of compressive acquisition of
multi-band signals introduced by Davenport and Wakin, giving MSE approximation bounds for the
dictionary formed by modulation of the critical number of prolates.
\end{abstract}

\maketitle

\section{Introduction}
The \emph{discrete prolate spheroidal sequences} (DPSS),
introduced by Slepian in \cite{SlepianV}, play a fundamental role both in
Thomson's multitaper method for spectral estimation of stationary processes \cite{Thomson},
and in the method proposed by Davenport and Wakin for compressive acquisition
of multi-band analog signals in the presence of off-grid frequencies \cite{DW}. (See also
\cite{HogLak,DB,adha16,FST}.)

In this article we provide Mean Squared Error (MSE) estimates for both
methods based on the following observation: while the description of each individual discrete
Slepian function can be very subtle, the \emph{aggregated behavior} of the critical
number of solutions displays a simple profile, that can be quantified (see
Figures \ref{fig_1} and \ref{fig_2}).

Our bounds for the expectation of Thomson's estimator validate
some heuristics from \cite{Thomson} and elaborate on more qualitative work \cite{bronez,liro08}.
Similar performance bounds were until now only available for modified versions of Thomson's
method, that replace the Slepian sequences with other tapers that have analytic
expressions \cite{risi95}.

\section{Thomson's multitaper method}
\label{sec_tom}

Let $I=\left[ -1/2,1/2\right] $. Any stationary, real, ergodic, zero-mean,
Gaussian stochastic process has a \emph{Cram\'{e}r spectral representation}
\begin{equation*}
x(t)=\int_{I}e^{2\pi i\xi t}dZ(\xi )\text{,}
\end{equation*}
and the \emph{spectrum} $S(\xi )$, defined as
\begin{equation*}
S(\xi )d\xi =\mathbb{E}\{\left\vert dZ(\xi )\right\vert ^{2}\}\text{,}
\end{equation*}
and often called the \emph{power spectral density} of the process, yields
the periodic components of $x(t)$. The goal of spectral estimation is to
solve the highly underdetermined problem of \emph{estimating }$S(\xi )$
\emph{\ from a sample of }$N$\emph{\ contiguous observations }$
x(0),...,x(N-1)$. Embryonic approaches to the problem \cite{stokes, schuster}
used the so called \emph{periodogram}:
\begin{equation}
\widehat{S}(\xi )=\frac{1}{N}\left\vert \sum_{t=0}^{N-1}x(t)e^{-2\pi i\xi
t}\right\vert ^{2},
\end{equation}
whose analysis has influenced harmonic analysts since Norbert Wiener (see
\cite{Benedetto}). The periodogram can also be weighted with a data window
$\left\{ D_{t}\right\} _{t=0}^{N-1}$, usually called a\emph{\ taper}, giving
the estimator:
\begin{equation}
\widehat{S}_{D}(\xi )=\left\vert \sum_{t=0}^{N-1}x(t)D_{t}e^{-2\pi i\xi
t}\right\vert ^{2}\text{.}  \label{direct}
\end{equation}
The choice of the taper $\left\{ D_{t}\right\} _{t=0}^{N-1}$ can have a
significant effect on the resulting spectrum estimate $\widehat{S}_{D}$.
This is apparent by observing that its expectation is the convolution of the
\emph{true (nonobservable) spectrum} $S(\xi )$ with the \emph{spectral
window } $\left| \mathcal{F}D(\xi ) \right| ^2= \left|
\sum_{t=0}^{N-1}D_{t}e^{-2\pi i\xi t} \right| ^2$, i.e.,
\begin{equation}  \label{smooth}
\mathbb{E}\left\{ \widehat{S}_{D}(\xi )\right\} =S(\xi )\ast
\left| \mathcal{\ F}D(\xi ) \right| ^2.
\end{equation}
Thus, the bias of the tapered estimator, which is the difference $S(\xi) -
\mathbb{E}\{\widehat{S}_{D}(\xi )\}$, is determined by the smoothing effect
of $\left\{ D_t \right\}_{t=0}^{N-1}$ over the true spectrum. Ideally, the
function $\mathcal{F}D(\xi )$ should be concentrated on the interval $[-
\tfrac{1}{2N},\tfrac{1}{2N}]$, but the \emph{uncertainty principle of
Fourier analysis} precludes such perfect concentration (see e.g. \cite{HogLak}).

In \cite{Thomson}, Thomson used the DPSS basis to construct an algorithm that averages several
tapered estimates, whence
the name \emph{multitaper}. Thomson's multitaper method has been used in a variety of scientific
applications including climate analysis (see, for instance \cite{Science1997}, or \cite{PNAS2012}
for a local spherical approach), statistical signal analysis \cite{PW}, and it was used to
better understand the relation between atmospheric $CO_{2}$ and climate
change (see \cite[Section 1]{Thomson2}).

Today, Thomson's multitaper method remains an effective spectral estimation
method. It has recently found remarkable applications in
electroencephalography \cite{Neurosciences} and it is the preferred spectral
sensing procedure \cite{Cognitive} for the rapidly emerging field of
cognitive radio \cite{Haykin} \cite[Chapter 3]{HogLak}. In the next paragraph we provide an outline
of the method.

Thomson's method starts by selecting a target frequency smoothing band
$[-W,W]$ with $1/2N<W<1/2$, thus accepting a reduction in spectral resolution
by a factor of about $2NW$. The \emph{first step} consists of obtaining a
number $K=\left\lfloor 2NW\right\rfloor $ (the smallest integer not greater
than $2NW $) of estimates of the form \eqref{direct} by setting, for every
$k\in \{0,...,K-1\}$, $D_{t}=v_{t}^{(k)}(N,W)$, where the \emph{discrete
prolate spheroidal sequences} $v_{t}^{(k)}(N,W)$ are defined as the
solutions of the Toeplitz matrix eigenvalue equation
\begin{equation}
\label{eq_mee}
\sum_{n=0}^{N-1}\frac{\sin 2\pi W\left( t-n\right) }{\pi \left( t-n\right) }
v_{n}^{(k)}(N,W)=\lambda _{k}(N,W)v_{t}^{(k)}(N,W),
\end{equation}
and normalized by $||v^{(k)}||_2=1$.
The resulting tapered periodogram is then denoted by $\widehat{S}_{k}(\xi )$.
The \emph{second step}
consists of averaging. One uses the estimator
\begin{equation}
\widehat{S}_{(K)}(\xi )=\frac{1}{K}\sum_{k=0}^{K-1}\widehat{S}_{k}(\xi ),
\label{Thomson}
\end{equation}
which achieves a reduced variance (see \cite{Thomson} for an asymptotic
analysis of slowly varying spectra and \cite{WMP,liro08} for non-asymptotic
expressions).

To inspect the performance of the estimator $\widehat{S}_{(K)}(\xi )$ on the
spectral domain, let us consider the \emph{discrete prolate spheroidal
functions}, also known as \emph{Slepians}. They are the discrete Fourier
transforms of the sequences $v_{t}^{(k)}(N,W)$, denoted by $U_{k}(N,W;\xi )$,
and satisfy the
integral equation
\begin{equation}
\int_{-W}^{W}\mathbf{D}_{N}(\xi -\xi ^{\prime })U_{k}(N,W;\xi^{\prime })d\xi
^{\prime }=\lambda _{k}(N,W)U_{k}(N,W;\xi )\text{,}  \label{eq_prolates}
\end{equation}
where
\begin{equation}
\mathbf{D}_{N}(x)=\frac{\sin N\pi x}{\sin \pi x}  \label{Dirichlet}
\end{equation}
is the Dirichlet kernel. Observe that, according to (\ref{smooth}),
$\mathbb{E}\{\widehat{S}_{k}(\xi )\}$ is a smoothing average of the unobservable
spectrum by the kernel $\left\vert U_{k}(N,W;\xi )\right\vert ^{2}$. Recall
that the bias of each individual estimate in (\ref{Thomson}) is given by
\begin{equation}
Bias\left( \widehat{S}_{k}(\xi )\right) =\mathbb{E}\{\widehat{S}_{k}(\xi
)\}-S(\xi )=S(\xi )\ast\left\vert U_{k}(N,W;\xi )\right\vert ^{2}-S(\xi ).
\label{bias}
\end{equation}
The optimal concentration of the first prolate function on the interval
$[-W,W]$ leads to a low bias when $k=0$. But since the amount of energy of
$U_{k}(N,W;\xi )$ inside $[-W,W]$ decreases with $k$ (because the energy is
given by the eigenvalues in \eqref{eq_prolates} and they decrease from $1$
to just above $0$ as $k$ crosses the critical value $K$), the bias may increase for large
values of $k$. Let us inspect the \emph{averaged} estimator. Its expectation is
\begin{equation}
\mathbb{E}\{\widehat{S}_{(K)}(\xi )\}=\frac{1}{K}\sum_{k=0}^{K-1}\mathbb{E}
\{ \widehat{S}_{k}(\xi )\}=S(\xi ) \ast \frac{1}{K}\rho _{K}(N,W;\xi )\text{,
}  \label{Sk}
\end{equation}
where
\begin{equation}
\frac{1}{K}\rho _{K}(N,W;\xi )=\frac{1}{K}\sum_{k=0}^{K-1}\left\vert
U_{k}(N,W;\xi )\right\vert ^{2}.  \label{eq_inten}
\end{equation}
To explain the bias performance of the averaged estimator, Thomson \cite[Section 4]{Thomson}
observed that the \emph{spectral window} (\ref{eq_inten}) is very similar to a flat function
localized on $[-W,W]$ (see Figures \ref{fig_1} and \ref{fig_2}). This is an intriguing mathematical phenomenon.
Heuristically, it requires the functions in the sequence $\{\left\vert
U_{k}(N,W,\cdot )\right\vert ^{2}:k=0,\ldots, K-1\}$ to be organized inside
the interval $[-W,W]$ in a very particular way: \emph{each function tends to
fill in the empty energy spots left by the sum of the previous ones} - a
behavior reminiscent of the Pythagorean relation for pure frequencies: $
\sin ^{2}(t)+\cos ^{2}(t)=1$. More precisely, claiming that the spectral
window in Thomson's method approximates an ideal band-pass kernel means
that the two functions
\begin{equation}
\frac{1}{K}\rho _{K}(N,W,.)\text{ \ \ \ and \ \ \ \ }\frac{1}{2W}\mathbf{1}
_{[-W,W]}\text{,}  \label{two}
\end{equation}
approach each other as $K$ increases. This is indeed true and we provide an
analytic bound for the $L^{1}$-distance between the functions in \eqref{two}.

\begin{theorem}[Spectral leakage estimate]
\label{th_main} Let $N\geq 2$ be an integer, $W\in (-1/2,1/2)$ and set $
K:=\left\lfloor 2NW\right\rfloor $. Then
\begin{equation}
\left\Vert \frac{1}{K}\rho _{K}(N,W,\cdot )-\frac{1}{2W}\mathbf{1}
_{[-W,W]}\right\Vert _{L^{1}(I)}\lesssim \frac{\log N}{K}.  \label{L1}
\end{equation}
\end{theorem}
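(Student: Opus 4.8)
I would reduce the $L^{1}$ estimate to a statement about the Slepian eigenvalues $\lambda_k=\lambda_k(N,W)$ and then control the relevant eigenvalue sums by a single trace quantity (throughout one may assume $0<W<1/2$ and $K=\lfloor 2NW\rfloor\ge 1$). The first step is to record two elementary identities. Since the sequences $v^{(k)}$ form an orthonormal basis of $\ell^{2}(\{0,\dots,N-1\})$, their discrete Fourier transforms $U_k$ form an orthonormal basis of the $N$-dimensional space $\mathcal E_N\subset L^{2}(I)$ spanned by $\{e^{-2\pi i t\xi}\}_{t=0}^{N-1}$, whose reproducing kernel has constant diagonal $N$; hence $\sum_{k=0}^{N-1}\abs{U_k(N,W;\xi)}^{2}=N$ for every $\xi\in I$. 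On the other hand, pairing \eqref{eq_prolates} with $U_k$ in $L^{2}(I)$ and using $\|U_k\|_{L^{2}(I)}=1$ gives $\int_{-W}^{W}\abs{U_k}^{2}=\lambda_k$, so $\int_{I\setminus[-W,W]}\abs{U_k}^{2}=1-\lambda_k$. From the first identity $\rho_K=N-\sum_{k=K}^{N-1}\abs{U_k}^{2}$ pointwise on $I$, so $\rho_K-N\mathbf{1}_{[-W,W]}$ equals $-\sum_{k\ge K}\abs{U_k}^{2}\le 0$ on $[-W,W]$ and $\sum_{k<K}\abs{U_k}^{2}\ge 0$ on $I\setminus[-W,W]$. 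Integrating and using the second identity,
\begin{equation*}
\big\Vert \rho_K(N,W,\cdot)-N\mathbf{1}_{[-W,W]}\big\Vert_{L^{1}(I)}=\sum_{k=K}^{N-1}\lambda_k+\sum_{k=0}^{K-1}(1-\lambda_k);
\end{equation*}
since $\big\Vert N\mathbf{1}_{[-W,W]}-\tfrac{K}{2W}\mathbf{1}_{[-W,W]}\big\Vert_{L^{1}(I)}=\abs{2NW-K}\le 1$, dividing by $K$ shows that \eqref{L1} follows once one proves $\sum_{k=K}^{N-1}\lambda_k+\sum_{k=0}^{K-1}(1-\lambda_k)\lesssim\log N$.

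The engine of the argument is the trace bound $T:=\sum_{k=0}^{N-1}\lambda_k(1-\lambda_k)\lesssim\log N$. Let $Q_N$ be the orthogonal projection of $L^{2}(I)$ onto $\mathcal E_N$ (its kernel has modulus $\abs{\mathbf{D}_N(\xi-\eta)}$ and diagonal $N$) and $M$ multiplication by $\mathbf{1}_{[-W,W]}$. The $\lambda_k$ are the eigenvalues of $A:=MQ_NM$, so $T=\trace(A)-\trace(A^{2})$; using $\trace(A)=2NW$, $M^{2}=M$, and $\int_I\abs{\mathbf{D}_N(\xi-\eta)}^{2}\,d\eta=N$, this collapses to
\begin{equation*}
T=\int_{-W}^{W}\!\!\int_{I\setminus[-W,W]}\abs{\mathbf{D}_N(\xi-\eta)}^{2}\,d\eta\,d\xi .
\end{equation*}
Now $\abs{\mathbf{D}_N(x)}\le\min\!\big(N,\tfrac{1}{2\Vert x\Vert}\big)$, where $\Vert x\Vert$ is the distance from $x$ to $\mathbb{Z}$; inserting this bound, integrating in $\eta$ and then in $\xi$ (splitting according to whether $\mathrm{dist}(\xi,\{-W,W\})$ exceeds $1/N$), gives $T\le 2+\log(2NW)\lesssim\log N$. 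This is a routine computation, of the kind behind the classical Landau--Widom trace asymptotics.

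It remains to deduce $\sum_{k\ge K}\lambda_k+\sum_{k<K}(1-\lambda_k)\lesssim\log N$ from $T\lesssim\log N$. The only extra ingredient is the elementary inequality $\min(\lambda,1-\lambda)\le 2\lambda(1-\lambda)$ for $\lambda\in[0,1]$ together with the monotonicity of the $\lambda_k$. Put $n_+:=\#\{k:\lambda_k\ge 1/2\}$. For $k<n_+$ one has $1-\lambda_k\le 2\lambda_k(1-\lambda_k)$ and for $k\ge n_+$ one has $\lambda_k\le 2\lambda_k(1-\lambda_k)$, whence $\sum_{k<n_+}(1-\lambda_k)\le 2T$ and $\sum_{k\ge n_+}\lambda_k\le 2T$; substituting into $2NW=\sum_{k<n_+}\lambda_k+\sum_{k\ge n_+}\lambda_k$ yields $\abs{n_+-2NW}\le 2T$, hence $\abs{n_+-K}\le 2T+1$. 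Finally, in $\sum_{k<K}(1-\lambda_k)$ the terms with $\lambda_k\ge 1/2$ sum to at most $2T$, in $\sum_{k\ge K}\lambda_k$ the terms with $\lambda_k<1/2$ sum to at most $2T$, and the remaining indices are precisely those strictly between $n_+$ and $K$---at most $2T+1$ of them, each contributing at most $1$. So the desired sum is at most $6T+1\lesssim\log N$, which completes the proof.

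The main obstacle is conceptual rather than computational: recognizing that the whole estimate rests on the single trace quantity $T$ and on the elementary observation $\min(\lambda,1-\lambda)\le 2\lambda(1-\lambda)$, which makes it possible to bypass any fine information about individual Slepian eigenvalues or the precise location and width of the ``plunge region''---information that is available but whose use would be heavier. The only genuine computation is the kernel integral bounding $T$, and it is of a classical and well-understood type.
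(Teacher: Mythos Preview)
Your proof is correct and follows the same overall strategy as the paper: reduce the $L^{1}$ estimate to a statement about the eigenvalue sums $\sum_{k<K}(1-\lambda_k)$ and $\sum_{k\ge K}\lambda_k$, control those via the single trace quantity $T=\sum_k\lambda_k(1-\lambda_k)$, and bound $T\lesssim\log N$ by a Dirichlet-kernel computation.

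The execution differs in a few places worth recording. First, your reduction is sharper: you obtain the \emph{identity} $\|\rho_K-N\mathbf{1}_{[-W,W]}\|_{L^1(I)}=\sum_{k\ge K}\lambda_k+\sum_{k<K}(1-\lambda_k)$ by exploiting the sign of $\rho_K-N$ on $[-W,W]$ via the full orthonormal expansion $\sum_{k=0}^{N-1}|U_k|^2=N$, whereas the paper only uses the inequality $\rho_K\le N$ and a triangle-inequality split. Second, the paper bounds $T$ by passing through an $L^1$ approximation lemma for $f\mapsto \tfrac1N f*|\mathbf{D}_N|^2$ (its Lemma in Section~\ref{sec_tech}), while you compute the double integral $\int_{-W}^{W}\int_{I\setminus[-W,W]}|\mathbf{D}_N|^2$ directly; both are standard, but the paper's route yields the additional Theorem~\ref{L1p} for free via \eqref{eq_aa}. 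Third, to pass from $T$ to the eigenvalue sums the paper uses an algebraic manipulation hinging on the single value $\lambda_{K-1}$ (monotonicity gives $\sum\lambda_k(1-\lambda_k)\ge\lambda_{K-1}\sum_{k<K}(1-\lambda_k)+(1-\lambda_{K-1})\sum_{k\ge K}\lambda_k$, which collapses to $K-\sum_{k<K}\lambda_k-1$), whereas you use the threshold index $n_+=\#\{k:\lambda_k\ge 1/2\}$ together with $\min(\lambda,1-\lambda)\le 2\lambda(1-\lambda)$. Your route is slightly more robust in that it never needs $\lambda_{K-1}$ to be bounded away from $0$ or $1$, at the cost of the extra step $|n_+-K|\le 2T+1$.
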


\begin{figure}[tbp]
\centering
\includegraphics[scale=0.45]{./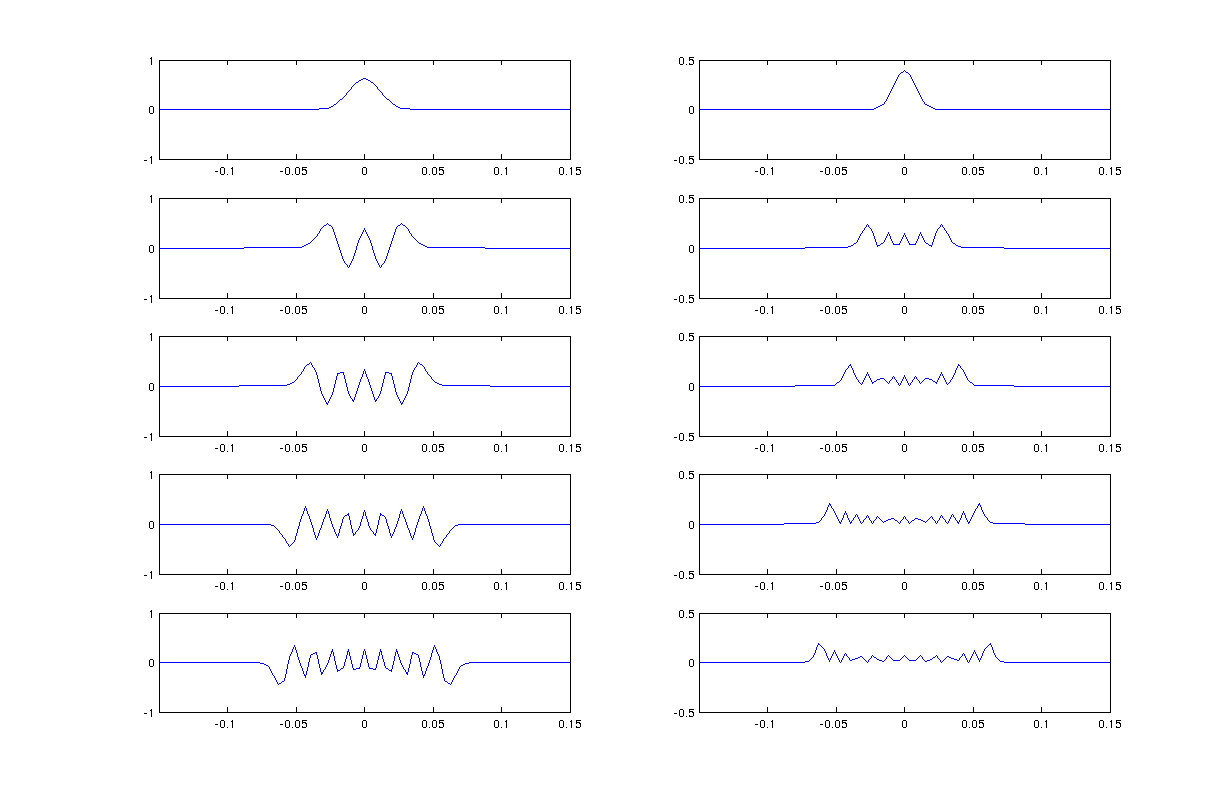}
\label{fig_prolates}
\caption{Slepians $U_k$ and their squares $\abs{U_k}^2$, for $N=256$ and $W=0.1$ and $k=1,5,9,19$.}
\label{fig_1}
\end{figure}

\begin{figure}[tbp]
\centering
\includegraphics[scale=0.40]{./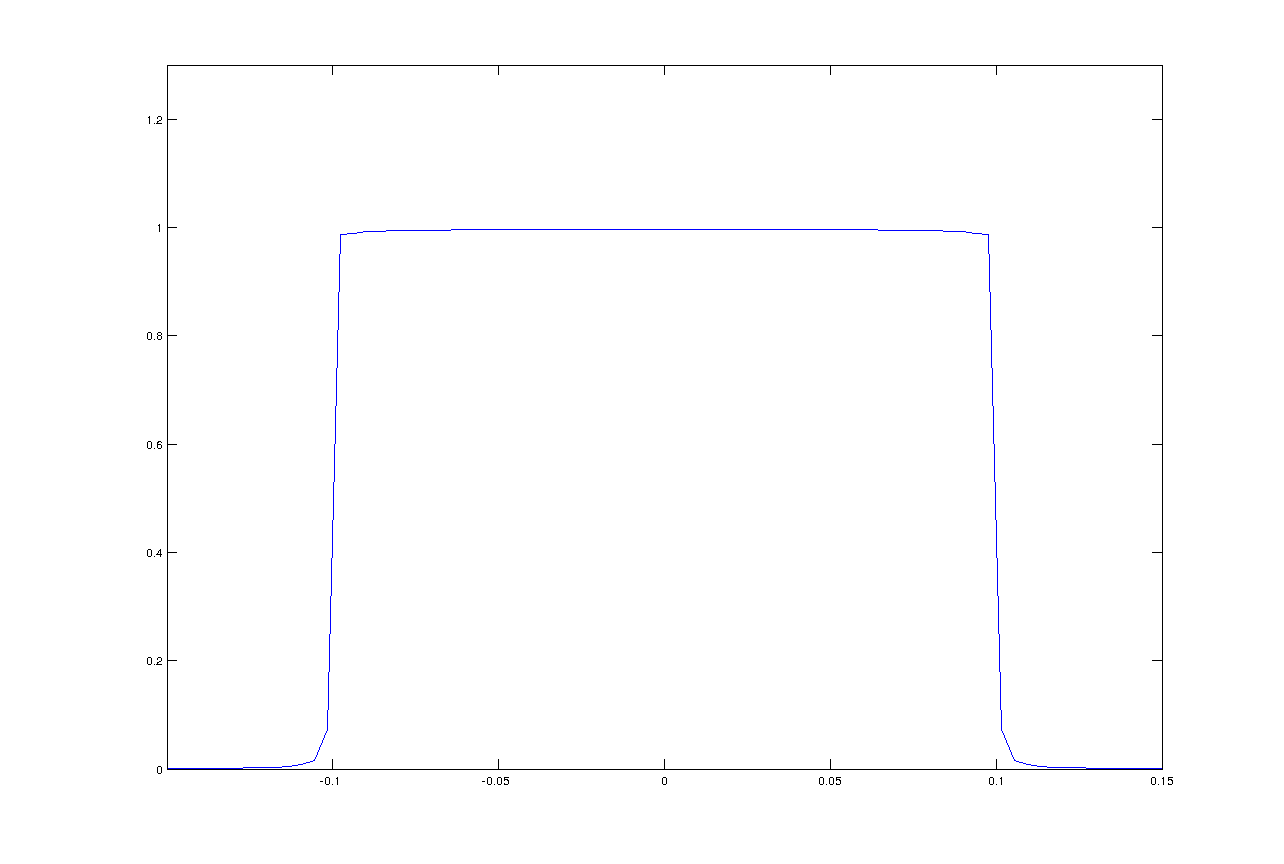}
\caption{Thomson's spectral window, with $N=256$ and $W=0.1$.}
\label{fig_2}
\end{figure}

The estimate \eqref{L1} is precisely what we need in order
to quantify Thomson's asymptotic analysis of the bias of the multitaper
estimator \cite[pg.. 1062]{Thomson} and validate the bias-variance
trade-off. Indeed, the $L^1$ deviation estimate allows one to
control the effect of replacing the spectral window by an ideal kernel in \eqref{Sk}.
This is explained in detail in Section \ref{sec_conclusion}.

As an alternative to \eqref{Thomson}, Thomson also considered a modified
method, where all Slepian sequences are used as tapers - instead of just the
first $K$ - but the corresponding tapered periodograms are weighted with the
eigenvalues. In this case, a similar analysis applies and the spectral
window is:
\begin{align}  \label{eq_inten_modif}
\tilde \rho _{K}(N,W,\xi) := \sum_{k=0}^{N-1} \lambda_k(N,W) \left\vert
U_{k}(N,W;\xi )\right\vert ^{2}.
\end{align}

We provide similar bounds for the modified method.

\begin{theorem}
\label{L1p} Let $N\geq 2$ be an integer and $W\in (-1/2,1/2)$. Then
\begin{equation}
\left\Vert \frac{1}{K}\tilde \rho _{K}(N,W,\cdot )-\frac{1}{2W}\mathbf{1}
_{[-W,W]}\right\Vert _{L^{1}(I)}\lesssim \frac{\log N}{K}.
\end{equation}
\end{theorem}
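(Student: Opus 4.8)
The plan is to deduce Theorem~\ref{L1p} from the already-established Theorem~\ref{th_main} by estimating the $L^1$ distance between the two spectral windows $\tfrac1K\tilde\rho_K$ and $\tfrac1K\rho_K$. Abbreviating $\lambda_k:=\lambda_k(N,W)$ and $U_k:=U_k(N,W;\cdot)$, the definitions give
\[
\tilde\rho_K(N,W,\cdot)-\rho_K(N,W,\cdot)=\sum_{k=K}^{N-1}\lambda_k\,\abs{U_k}^2-\sum_{k=0}^{K-1}(1-\lambda_k)\,\abs{U_k}^2 .
\]
Each $\abs{U_k}^2$ is nonnegative, and by Parseval together with the normalization $\sum_{t}\abs{v_t^{(k)}}^2=1$ we have $\int_I\abs{U_k(\xi)}^2\,d\xi=1$. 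Since $0\le\lambda_k\le1$, the triangle inequality in $L^1(I)$ yields
\[
\bigl\|\tilde\rho_K(N,W,\cdot)-\rho_K(N,W,\cdot)\bigr\|_{L^1(I)}\le\sum_{k=K}^{N-1}\lambda_k+\sum_{k=0}^{K-1}(1-\lambda_k)=:\Sigma_{N,W}.
\]
Combined with Theorem~\ref{th_main} and the triangle inequality, this reduces the statement to the purely spectral estimate $\Sigma_{N,W}\lesssim\log N$.

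The first ingredient is the classical trace identity for the Slepian eigenvalues. The sequences $v^{(k)}$ diagonalize the symmetric matrix $T$ with entries $T_{t,n}=\frac{\sin2\pi W(t-n)}{\pi(t-n)}$ and $T_{t,t}=2W$, so $\trace(T)=2NW$, while $\trace(T^2)=\sum_{t,n=0}^{N-1}T_{t,n}^2=\sum_{\abs{j}\le N-1}(N-\abs{j})\,g(j)$ with $g(j):=\bigl(\tfrac{\sin2\pi Wj}{\pi j}\bigr)^2$. Because $g(j)=\bigl|\int_{-W}^{W}e^{2\pi i j\xi}\,d\xi\bigr|^2$, Parseval gives $\sum_{j\in\mathbb{Z}}g(j)=\int_I\mathbf{1}_{[-W,W]}^2=2W$, hence
\[
\sum_{k=0}^{N-1}\lambda_k(1-\lambda_k)=\trace(T)-\trace(T^2)=N\sum_{\abs{j}\ge N}g(j)+\sum_{1\le\abs{j}\le N-1}\abs{j}\,g(j).
\]
With $g(j)\le\pi^{-2}j^{-2}$ the first sum on the right is $O(1)$ and the second is $\le\tfrac{2}{\pi^2}\sum_{j=1}^{N-1}\tfrac1j\lesssim\log N$, so $\sum_k\lambda_k(1-\lambda_k)\lesssim\log N$.

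It remains to pass from $\sum_k\lambda_k(1-\lambda_k)$ to $\Sigma_{N,W}$. The only subtlety is that the cut-off index $K=\lfloor2NW\rfloor$ in $\Sigma_{N,W}$ need not equal the plunge index $K':=\#\{k:\lambda_k>1/2\}$. Using the elementary inequality $\min(a,1-a)\le2a(1-a)$ on $[0,1]$ and the monotonicity $\lambda_0\ge\lambda_1\ge\cdots$, one finds $S':=\sum_{k<K'}(1-\lambda_k)+\sum_{k\ge K'}\lambda_k=\sum_k\min(\lambda_k,1-\lambda_k)\le2\sum_k\lambda_k(1-\lambda_k)\lesssim\log N$. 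From the identity $2NW=\sum_k\lambda_k=K'-\sum_{k<K'}(1-\lambda_k)+\sum_{k\ge K'}\lambda_k$ and the definition of $S'$ we obtain $\abs{2NW-K'}\le S'\lesssim\log N$, hence $\abs{K-K'}\le1+\abs{2NW-K'}\lesssim\log N$. Finally, $\Sigma_{N,W}-S'$ is a sum of exactly $\abs{K-K'}$ terms of the form $\pm(2\lambda_k-1)$ with indices lying between $K$ and $K'$, each of modulus at most $1$; therefore $\Sigma_{N,W}\le S'+\abs{K-K'}\lesssim\log N$. Putting the pieces together,
\[
\Bigl\|\tfrac1K\tilde\rho_K(N,W,\cdot)-\tfrac1{2W}\mathbf{1}_{[-W,W]}\Bigr\|_{L^1(I)}\le\tfrac1K\Sigma_{N,W}+\Bigl\|\tfrac1K\rho_K(N,W,\cdot)-\tfrac1{2W}\mathbf{1}_{[-W,W]}\Bigr\|_{L^1(I)}\lesssim\frac{\log N}{K}.
\]

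I do not expect a serious obstacle here: once Theorem~\ref{th_main} is in hand, the argument reduces to the trace bound $\sum_k\lambda_k(1-\lambda_k)\lesssim\log N$ (routine, via Parseval and the harmonic series) together with the bookkeeping needed to reconcile the threshold $K$ with the true plunge index $K'$. That reconciliation is the one place requiring care, but since the discrepancy $\abs{K-K'}$ is itself $O(\log N)$ it is harmlessly absorbed into the final bound.
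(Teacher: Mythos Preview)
Your argument is correct, but it takes a considerably longer route than the paper's. The paper observes that the weighted window $\tilde\rho_K$ has a closed form: by \eqref{eq_aa}, $\tilde\rho_K(N,W,\xi)=\bigl(\mathbf{1}_{[-W,W]}\ast|\mathbf{D}_N|^2\bigr)(\xi)$, so Theorem~\ref{L1p} follows in one line from Lemma~\ref{MainLemma} (plus the trivial bound $|2NW-K|\le1$ to adjust the normalization from $1/N$ to $1/K$). In particular, the paper's logical flow is the reverse of yours: Theorem~\ref{L1p} is the easy one, and Theorem~\ref{th_main} is deduced from the eigenvalue machinery (Lemma~\ref{lemma_sum_eig}). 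Your reduction to Theorem~\ref{th_main} via $\|\tilde\rho_K-\rho_K\|_{L^1}\le\Sigma_{N,W}$ is valid, and your trace computation $\sum_k\lambda_k(1-\lambda_k)\lesssim\log N$ reproduces, in the sequence domain, exactly what the paper proves on the function side in the proof of Lemma~\ref{lemma_sum_eig}. The detour through the plunge index $K'$ is unnecessary, though: since each $\lambda_k\le1$ and $K=\lfloor2NW\rfloor$, one has directly $\Sigma_{N,W}=(2NW-K)+2\bigl(K-\sum_{k<K}\lambda_k\bigr)\le1+2\bigl(K-\sum_{k<K}\lambda_k\bigr)$, and Lemma~\ref{lemma_sum_eig} already gives $K-\sum_{k<K}\lambda_k\lesssim\log N$. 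So what your approach buys is independence from the convolution identity \eqref{eq_aa}; what the paper's approach buys is a two-line proof.
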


\section{Analysis of Thomson's spectral window}
\label{sec_proof}
Let $I:=[-1/2,1/2]$ and let us denote the exponentials
by $e_{\omega}(x):=e^{2\pi ix\omega }$. We will always let $N\geq 2$ be an integer and $W\in
(-1/2,1/2)$. For two non-negative functions $f,g$, the notation $f\lesssim g$ means that there
exists a constant $C>0$ such that $f\leq Cg$. (The constant $C$, of course, does not depend on the
parameters $N,W$.)

We normalize the Slepian functions by $\int_{I}\left\vert U_{k}(N,W;\xi
)\right\vert ^{2}\,d\xi =1$. We will need a
description of the profile of the eigenvalues in \eqref{eq_mee}. The following
lemma will be key in most of the estimates.

\begin{lemma}
\label{lemma_sum_eig} For $N \geq 2$, $W \in (-1/2,1/2)$ and $K:=\left\lfloor 2NW\right\rfloor$:
\begin{equation}  \label{eq_bound}
\left| 1 - \frac{1}{K}\sum_{k=0}^{K-1} \lambda_k(N,W) \right| \lesssim \frac{
\log N}{K}.
\end{equation}
\end{lemma}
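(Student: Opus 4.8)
The plan is to exploit the well-known asymptotics for the trace of the prolate (sinc-kernel) operator. Recall that the eigenvalues $\lambda_k(N,W)$ are those of the $N\times N$ prolate matrix $B$ with entries $B_{tn}=\frac{\sin 2\pi W(t-n)}{\pi(t-n)}$, so that $\sum_{k=0}^{N-1}\lambda_k(N,W)=\trace(B)=N\cdot 2W$ (the diagonal entries are all $2W$). Since $K=\lfloor 2NW\rfloor$, we have $\trace(B)=2NW=K+\theta$ with $\theta\in[0,1)$, so $\frac{1}{K}\sum_{k=0}^{N-1}\lambda_k=1+\theta/K$. Hence the quantity to be bounded is
\begin{equation*}
\Bigl|1-\frac1K\sum_{k=0}^{K-1}\lambda_k\Bigr|
=\Bigl|\frac1K\sum_{k=K}^{N-1}\lambda_k-\frac\theta K\Bigr|
\le \frac1K\sum_{k=K}^{N-1}\lambda_k+\frac1K,
\end{equation*}
so everything reduces to showing $\sum_{k=K}^{N-1}\lambda_k\lesssim \log N$, i.e.\ that the total energy of the eigenvalues beyond the critical index $K$ is only logarithmic.

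First I would quote (or re-derive) the classical eigenvalue clustering estimate for the prolate operator. The sharp statement, going back to Landau--Widom and refined by Slepian, is that for the continuous time--frequency localization operator the number of eigenvalues in a transition window $(\varepsilon,1-\varepsilon)$ is $O(\log(NW)\log(1/\varepsilon))$, and consequently $\sum_{k\ge K}\lambda_k = O(\log N)$. For the discrete DPSS the analogous bound holds; a convenient self-contained route is to use the identity $\sum_k \lambda_k(1-\lambda_k)=\|B-B^2\|_{\mathrm{trace}}$, which can be computed explicitly from the kernel and is known to be $O(\log N)$ (this is the discrete analogue of the estimate $\trace(Q_\Omega P_T Q_\Omega - (Q_\Omega P_T Q_\Omega)^2)\lesssim \log$). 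Since $\lambda_k\le 2\lambda_k(1-\lambda_k)$ for $k\ge K$ (because then $\lambda_k\le 1/2$, at least after discarding at most one borderline index), one gets $\sum_{k\ge K}\lambda_k\le 2\sum_k\lambda_k(1-\lambda_k)\lesssim \log N$, which is exactly what is needed.

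Putting the pieces together: combine the trace identity $\sum_{k=0}^{N-1}\lambda_k=2NW$ with the tail bound $\sum_{k\ge K}\lambda_k\lesssim\log N$ and the trivial $\theta/K\le 1/K\lesssim \log N/K$, to conclude $\bigl|1-\frac1K\sum_{k=0}^{K-1}\lambda_k\bigr|\lesssim\frac{\log N}{K}$.

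I expect the main obstacle to be obtaining the $O(\log N)$ tail bound with a clean, parameter-uniform constant in the \emph{discrete} setting, rather than merely citing the continuous Landau--Widom asymptotics. The cleanest path is probably the $\sum_k\lambda_k(1-\lambda_k)$ route: one must show $\trace(B-B^2)\lesssim\log N$ uniformly in $W\in(0,1/2)$, which amounts to estimating $\sum_{t,n}|B_{tn}|^2$-type quantities against the diagonal — a calculation with the discrete sinc kernel where the logarithm arises from $\sum_{1\le j\le N}1/j$. Care is needed near the degenerate regimes $W\to 0$ and $W\to 1/2$ (where $K$ is small), but there the bound $\frac{\log N}{K}$ is weak and easy to meet, e.g.\ directly from $\sum_{k\ge K}\lambda_k\le N$ when $K\gtrsim N/\log N$ fails only for small $W$, in which case one can argue separately. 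I would also double-check the handling of the single possible index $k$ with $\lambda_k$ near $1/2$ straddling $k=K$, since the inequality $\lambda_k\le 2\lambda_k(1-\lambda_k)$ needs $\lambda_k\le 1/2$; absorbing that one term costs only an additive $1\lesssim\log N$.
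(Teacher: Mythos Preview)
Your overall strategy matches the paper's: both arguments hinge on the estimate $\sum_{k}\lambda_k(1-\lambda_k)=\trace(B-B^2)\lesssim\log N$, combined with the trace identity $\sum_k\lambda_k=2NW$. The paper obtains the $\log N$ bound by writing $\trace(B-B^2)$ as $\int_{-W}^{W}\bigl[N\mathbf{1}_{[-W,W]}-\mathbf{1}_{[-W,W]}*|\mathbf{D}_N|^2\bigr]$ and invoking its approximation lemma for convolution with the squared Dirichlet kernel; your proposed direct kernel computation is equivalent.

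The one genuine gap is the passage from $\sum_k\lambda_k(1-\lambda_k)\lesssim\log N$ to $\sum_{k\ge K}\lambda_k\lesssim\log N$. Your claim that $\lambda_k\le 1/2$ for $k\ge K$ ``after discarding at most one borderline index'' is not justified: nothing prevents $\lambda_K,\lambda_{K+1},\ldots,\lambda_{K+m}$ from all exceeding $1/2$ for some $m$ that is not a priori bounded by a constant. (One can show $m\lesssim\log N$, but doing so already requires the kind of estimate you are trying to prove.) The paper avoids this case analysis entirely with a neat pivot at $\lambda_{K-1}$: using monotonicity,
\[
\sum_k\lambda_k(1-\lambda_k)\ \ge\ \lambda_{K-1}\sum_{k<K}(1-\lambda_k)\ +\ (1-\lambda_{K-1})\sum_{k\ge K}\lambda_k,
\]
and then, substituting $\sum_{k\ge K}\lambda_k=2NW-\sum_{k<K}\lambda_k$, the right-hand side collapses algebraically to $2NW-\sum_{k<K}\lambda_k+\lambda_{K-1}(K-2NW)\ge K-\sum_{k<K}\lambda_k-1$, regardless of whether $\lambda_{K-1}$ lies above or below $1/2$. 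This sidesteps any need to locate the crossing index and gives the desired bound in one line. You could also repair your route by first bounding the number of indices $k\ge K$ with $\lambda_k>1/2$ via $|J-K|\lesssim\log N$ where $J$ is the crossing index (which follows from combining $\sum_{k\le J}(1-\lambda_k)\lesssim\log N$ and $\sum_{k>J}\lambda_k\lesssim\log N$ with the trace identity), but the paper's pivot trick is cleaner.
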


We postpone the proof of Lemma \ref{lemma_sum_eig} to Section \ref{sec_tech}. The
quantity on the left-hand side of \eqref{eq_bound} has been studied in \cite{liro08} to
qualitatively analyze the performance of Thomson's method. Lemma \ref{lemma_sum_eig} refines the
analysis of \cite{liro08}, giving a concrete growth estimate. (See also the remarks after Theorem 5
in \cite{liro08}.)

\subsection{\textbf{Proof of Theorem }\protect\ref{th_main}}
We first estimate the narrow band error. Note that
\begin{equation*}
\rho _{K}(N,W;\xi )=\sum_{k=0}^{K-1}\left\vert U_{k}(N,W;\xi )\right\vert
^{2}\leq \sum_{k=0}^{N-1}\left\vert U_{k}(N,W;\xi )\right\vert
^{2}=D_{N}(0)=N\text{.}
\end{equation*}
Consequently, $\tfrac{1}{K}\rho _{K}(N,W;\xi )\leq \tfrac{N}{K}$ and, using \eqref{eq_eigen}, we can
estimate:
\begin{align*}
& \int_{-W}^{W}\left\vert \frac{1}{K}\rho _{K}(N,W;\xi )-\frac{1}{2W}\mathbf{1}_{[-W,W]}(\xi
)\right\vert \,d\xi  \\
& \qquad \leq \int_{-W}^{W}\left\vert \left( \frac{1}{2W}-\frac{N}{K}\right)
\mathbf{1}_{[-W,W]}(\xi )\right\vert \,d\xi +\int_{-W}^{W}\left\vert \frac{1}{K}\rho _{K}(N,W;\xi
)-\frac{N}{K}\mathbf{1}_{[-W,W]}(\xi )\right\vert
\,d\xi  \\
& \qquad =2W\left( \frac{N}{K}-\frac{1}{2W}\right)
+\frac{2NW}{K}-\frac{1}{K}\sum_{k=0}^{K-1}\int_{-W}^{W}\left\vert U_{k}(N,W;\xi )\right\vert
^{2}\,d\xi  \\
& \qquad \leq \frac{2}{K}+1-\frac{1}{K}\sum_{k=0}^{K-1}\lambda _{k}\lesssim
\frac{\log N}{K},
\end{align*}
thanks to Lemma \ref{lemma_sum_eig}. Now we estimate the broad brand
leakage:
\begin{align*}
& \int_{I\setminus \lbrack -W,W]}\left\vert \frac{1}{K}\rho _{K}(N,W;\xi
)-\frac{1}{2W}\mathbf{1}_{[-W,W]}(\xi )\right\vert \,d\xi =\int_{I\setminus
\lbrack -W,W]}\frac{1}{K}\rho _{K}(N,W;\xi )\,d\xi  \\
& \qquad =\frac{1}{K}\sum_{k=0}^{K-1}(1-\lambda _{k})=1-\frac{1}{K}
\sum_{k=0}^{K-1}\lambda _{k},
\end{align*}
so the conclusion follows invoking again Lemma \ref{lemma_sum_eig}.

\subsection{\textbf{Proof of Theorem }\protect\ref{L1p}}

See Section \ref{sec_prl1p}.

\section{MSE bounds for Thomson's multitaper}
\label{sec_conclusion}

In \cite[Section IV]{Thomson}, Thomson estimated $Bias(\widehat{S}_{(K)})$
by using the approximation $\frac{1}{K}\rho _{K}(N,W,\cdot )\approx
\frac{1}{2W}\mathbf{1}_{[-W,W]}$. Using Theorem \ref{th_main} we inspect that approximation:
\begin{equation*}
\left\vert Bias(\widehat{S}_{(K)}(\xi ))\right\vert \leq \left\vert \left|
S* \frac{1}{K}\rho_{K}(N,W,\cdot) - S*\frac{1}{2W}\mathbf{1}_{[-W,W]}
\right| \right\vert_\infty +\left\vert \left| S-S*\frac{1}{2W}\mathbf{1}_{[-W,W]} \right|
\right\vert_\infty
\end{equation*}
and, if $S$ is a bounded function, then Theorem \ref{th_main} implies that
\begin{equation*}
\left\vert \left| S*\frac{1}{K}\rho_{K}(N,W,\cdot)- S*\frac{1}{2W}\mathbf{1}
_{[-W,W]} \right| \right\vert_\infty \lesssim
\left\vert \left\vert S \right\vert\right\vert_\infty \frac{\log N}{K}.
\end{equation*}
(Similar considerations apply to the modified estimator where tapered
periodograms are weighted by the corresponding eigenvalue; in that case we
invoke Theorem \ref{L1p}.) The remaining term $\left\vert \left| S-S*\frac{1}{2W}\mathbf{1}_{[-W,W]}
\right| \right\vert_\infty $ can be bounded by
assuming that $S$ is smooth. For example, if, as in Thomson's work, $S$ is
assumed to $C^2$ (as a periodic function), then $\left\vert \left|
S-S*\frac{1}{2W}\mathbf{1}_{[-W,W]} \right| \right\vert_\infty \lesssim W^{2}$,
leading to the bias estimate:
\begin{equation}
\mathrm{Bias}(\widehat{S}_{(K)}(\xi ))\lesssim W^{2}+\frac{\log N}{K}.
\label{bias_estimate}
\end{equation}
On the other hand, for a slowly varying spectrum $S$, Thomson \cite{Thomson}
argues that
\begin{equation}  \label{eq_var}
\mathrm{Var}\left( \widehat{S}_{(K)}(\xi )\right) \lesssim \frac{1}{K}\text{, }
\end{equation}
(see, \cite{WMP}, \cite{liro08} or \cite[Section 3.1.2]{HogLak} for precise
expressions for the variance; in particular \cite[Theorem 2]{liro08} for the
bound in \eqref{eq_var}, valid when $S \in L^\infty$.) Given a number of
available observations, the estimates in \eqref{bias_estimate} and \eqref{eq_var} show how much bias
can be expected, in order to bring the
variance down by a factor of $1/K$. This leads to a concrete estimate for
the mean squared error
\begin{align*}
\mathrm{MSE}(\widehat{S}_{(K)}(\xi)) &= \mathbb{E}(S(\xi)-\widehat{S}_{(K)}(\xi))^2
\\ &=
\mathrm{Bias}(\widehat{S}_{(K)}(\xi))^2 + \mathrm{Var}(\widehat{S}_{(K)}(\xi))
\\
&\lesssim W^{4}+\frac{\log^2 N}{K^2} + \frac{1}{K},
\end{align*}
that can be used to decide on the value of the bandwidth resolution
parameter $W$.

Thus, in the slowly varying regime, the
error due to spectral leakage is largely dominated by the variance and
therefore, in agreement with Thomson's analysis, the mean squared error is $\lesssim W^{4} +
\frac{1}{K} \asymp \left(\frac{K}{N} \right)^{4} + \frac{1}{K}$. We note that the value of $K$ that
minimizes
this expression satisfies $K \asymp N^{4/5}$ and gives
\begin{align*}
\mathrm{MSE}(\widehat{S}_{(K)}(\xi)) \lesssim N^{-4/5}.
\end{align*}
A similar relation holds for
the so-called minimum bias sinusoidal tapers \cite{risi95}.
(Recall that $K$ and $W$ are related by $K= \lfloor 2NW \rfloor$; choosing
a certain value for $K$ amounts to choosing a corresponding
value for $W$.)

\section{Compressive acquisition of multi-band signals}
\label{sec_cs}
An important step in signal processing is to provide finite-dimensional models that
adequately capture analog phenomena. This problem is delicate and naive discretizations can lead to poor reconstruction.
A concrete instance of the modeling problem occurs in the study of the compressive acquisition of multi-band signals:
their naive representation suffers from the so-called DFT leakage. (See \cite{DB} and \cite{adha16} for the modeling problem in compressive sensing, including multi-scale settings.)

Let $\Delta \subseteq \mathbb{R}$ be an interval
with $\abs{\Delta} \leq 1$
that is decomposed
as union of disjointly translated copies of a smaller interval $[-W,W]$:
\begin{equation*}
\Delta =\bigcup_{j=0}^{M-1}[-W,W]+\{2Wj\}.
\end{equation*}
We consider a so-called multi-band signal $x(t)$ whose Fourier transform is
supported on the union of $L\ll M$ translated copies of $[-W,W]$, say
\begin{equation}
\label{eq_delst}
\Delta _{\ast }=\bigcup_{n=0}^{L-1}[-W,W]+\{2Wj_{n}\},
\end{equation}
with $j_{n}\in \{0,\ldots ,M-1\}$. In order to acquire such a signal, the
problem is to efficiently represent the corresponding sampled vector
\begin{equation*}
\mathbf{x}=(x(0),\ldots ,x(N-1)).
\end{equation*}
Davenport and Wakin \cite{DW} proposed using the dictionary of modulated
Slepian sequences
\begin{equation}
\label{eq_D}
\mathcal{D}:=\{e^{-2\pi i(2Wj)(\boldsymbol{\cdot })}v^{(k)}:k=0,\ldots
K-1,j=0,\ldots ,M-1\},
\end{equation}
where $K\approx 2NW$ is a parameter and $v^{(k)}=v^{(k)}(N,W)$ are the
discrete prolate spheroidal sequences. The sampled vector $\mathbf{x}$
is expected to have an
approximately $LK$-sparse representation in $\mathcal{D}$. More precisely,
the sampled vector $\mathbf{x}$ associated with a signal $x$ with Fourier
transform supported on the set $\Delta _{\ast }$ from \eqref{eq_delst} is
expected to be approximately captured by the sub-dictionary
\begin{equation}
\label{eq_dst}
\mathcal{D}_{\ast }:=\{e^{-2\pi i(2Wj_{n})(\boldsymbol{\cdot })}v^{(k)}:k=0,\ldots K-1,n=0,\ldots
,L-1\}.
\end{equation}
For the critical value $K=\left\lfloor 2NW\right\rfloor $ we obtain the
following average case estimate.

\begin{theorem}
\label{th_cs}
Let $x(t)=\tfrac{1}{L}\sum_{n=0}^{L-1} x_n(t)$ be the average of
independent, continuous-time, zero-mean, Gaussian stationary random
processes $x_n(t)$ with corresponding power spectra $S_{x_n} :=
\frac{1}{2W}\mathbf{1}_{[-W,W]+\{2Wj_n\}}$. Let $\mathcal{D}_*$ be the dictionary in \eqref{eq_dst}
with $K=\left\lfloor 2NW\right\rfloor$. For $N \geq 2$, let $\mathbf{x}=(x(0), \ldots, x(N-1))$ be
the vector of finite samples of $x$,
and $P_{\mathcal{D}_*} \mathbf{x}$ its projection onto the linear span of $\mathcal{D}_*$. Then
\begin{align*}
\frac {\mathbb{E} \left\{ || \mathbf{x}- P_{\mathcal{D}_*} \mathbf{x} ||^2
\right\}} {\mathbb{E} \left\{ || \mathbf{x} ||^2 \right\}} \lesssim \frac{L
\log N}{K}.
\end{align*}
\end{theorem}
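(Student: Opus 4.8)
The plan is to reduce the projection error to a spectral-window quantity and then apply Theorem~\ref{th_main}. First I would exploit linearity: since $\mathbf{x}=\frac1L\sum_n \mathbf{x}_n$ where $\mathbf{x}_n=(x_n(0),\dots,x_n(N-1))$ is the sample vector of a process with spectrum supported on $[-W,W]+\{2Wj_n\}$, and since the projection $P_{\mathcal{D}_*}$ is linear, I would control $\|\mathbf{x}-P_{\mathcal{D}_*}\mathbf{x}\|$ by bounding each $\|\mathbf{x}_n-P_{\mathcal{D}_*}\mathbf{x}_n\|$. In fact, $\mathbf{x}_n$ lies ``morally'' in the span of the $n$-th modulated block $\{e_{-2Wj_n}v^{(k)}:k=0,\dots,K-1\}$, so it suffices to bound $\mathbb{E}\{\|\mathbf{x}_n - P_n\mathbf{x}_n\|^2\}$ where $P_n$ is the orthogonal projection onto that single block; modulation by $e_{-2Wj_n}$ is unitary on $\mathbb{C}^N$, so after demodulating I may assume $j_n=0$ and reduce to: \emph{the expected squared distance from the sample vector of a process with spectrum $\frac{1}{2W}\mathbf{1}_{[-W,W]}$ to the span of $\{v^{(k)}:k=0,\dots,K-1\}$ is $\lesssim \frac{\log N}{K}\,\mathbb{E}\|\mathbf{x}_n\|^2$.}

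Next I would compute both quantities in terms of the prolate eigenvalues. For a stationary process with spectrum $S_{x_n}$, the covariance matrix of $\mathbf{x}_n$ is the Toeplitz matrix $R$ with entries $R_{t,s}=\widehat{S_{x_n}}(t-s)$; when $S_{x_n}=\frac{1}{2W}\mathbf{1}_{[-W,W]}$ this is exactly (a scalar multiple of) the prolate kernel $\frac{\sin 2\pi W(t-s)}{\pi(t-s)}$ appearing in \eqref{eq_mee}, whose eigenvectors are the $v^{(k)}$ with eigenvalues $\lambda_k$. Hence $\mathbb{E}\|\mathbf{x}_n\|^2=\operatorname{trace}(R)=\frac{1}{2W}\sum_{k=0}^{N-1}\lambda_k \cdot(\text{normalization})$, and more importantly $\mathbb{E}\|\mathbf{x}_n-P_n\mathbf{x}_n\|^2=\operatorname{trace}\big((I-P_n)R\big)$, which because $P_n$ projects onto $\operatorname{span}\{v^{(0)},\dots,v^{(K-1)}\}$ equals $\frac{1}{2W}\sum_{k=K}^{N-1}\lambda_k$ up to the same normalization. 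The ratio is therefore $\frac{\sum_{k\ge K}\lambda_k}{\sum_{k\ge 0}\lambda_k}$. The denominator is $\asymp K$ since $\sum_{k\ge 0}\lambda_k=2NW\approx K$, and the numerator is $\sum_{k=0}^{N-1}\lambda_k-\sum_{k=0}^{K-1}\lambda_k = 2NW - \sum_{k=0}^{K-1}\lambda_k \le (2NW-K)+\big(K-\sum_{k=0}^{K-1}\lambda_k\big)\lesssim 1 + \log N$ by Lemma~\ref{lemma_sum_eig}. This gives $\mathbb{E}\|\mathbf{x}_n-P_n\mathbf{x}_n\|^2/\mathbb{E}\|\mathbf{x}_n\|^2\lesssim \frac{\log N}{K}$.

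Finally I would assemble the pieces. By independence and zero mean, $\mathbb{E}\|\mathbf{x}\|^2=\frac{1}{L^2}\sum_n \mathbb{E}\|\mathbf{x}_n\|^2$, and similarly $\mathbb{E}\|\mathbf{x}-P_{\mathcal{D}_*}\mathbf{x}\|^2 \le \frac{1}{L^2}\sum_n \mathbb{E}\|\mathbf{x}_n-P_n\mathbf{x}_n\|^2$ after using that $P_{\mathcal{D}_*}$ is an orthogonal projection onto a space containing each block range (so $\|\mathbf{x}-P_{\mathcal{D}_*}\mathbf{x}\|\le \|\mathbf{x}-\frac1L\sum_n P_n\mathbf{x}_n\|\le \frac1L\sum_n\|\mathbf{x}_n-P_n\mathbf{x}_n\|$, followed by a Cauchy--Schwarz / cross-term argument using independence). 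Since all the $x_n$ have the same spectral shape (just translated), $\mathbb{E}\|\mathbf{x}_n\|^2$ is the same value $E_0$ for every $n$, so $\mathbb{E}\|\mathbf{x}\|^2=E_0/L$ and $\mathbb{E}\|\mathbf{x}-P_{\mathcal{D}_*}\mathbf{x}\|^2\lesssim \frac{1}{L^2}\cdot L\cdot \frac{\log N}{K}E_0 = \frac{\log N}{LK}E_0$; dividing, the ratio is $\lesssim \frac{L\log N}{K}$, as claimed.

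The step I expect to be the main obstacle is controlling the \emph{cross terms} between different blocks when passing from $\|\mathbf{x}-P_{\mathcal{D}_*}\mathbf{x}\|^2$ to the sum of per-block errors: the modulated prolate sub-dictionaries for distinct $j_n$ are not exactly orthogonal, so one cannot simply split the projection blockwise. The clean way around this is to \emph{not} project blockwise but only use $P_{\mathcal{D}_*}\mathbf{x}$ optimal and bound it above by the error of the specific (suboptimal) competitor $\frac1L\sum_n P_n\mathbf{x}_n$, whose error decouples in expectation precisely because the $x_n$ are independent and zero-mean (all true cross-covariances vanish). So the potential near-non-orthogonality of $\mathcal{D}_*$ never actually enters the bound — it only helps — and the argument goes through with the elementary trace computation above.
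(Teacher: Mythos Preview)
Your argument is correct and follows essentially the same route as the paper: reduce the relative projection error to the tail sum $\sum_{k\ge K}\lambda_k(N,W)$ and then bound that sum by $\lesssim \log N$ via Lemma~\ref{lemma_sum_eig}. The only difference is packaging: the paper obtains the two intermediate estimates $\mathbb{E}\{\|\mathbf{x}\|^2\}$ and $\mathbb{E}\{\|\mathbf{x}-P_{\mathcal{D}_*}\mathbf{x}\|^2\}\le \tfrac{L}{2W}\sum_{k\ge K}\lambda_k$ by quoting Theorem~4.4 of \cite{DW}, whereas you derive them by hand through demodulation, the explicit Toeplitz covariance, and the competitor $\tfrac{1}{L}\sum_n P_n\mathbf{x}_n$. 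Your self-contained derivation is fine (and the worry you flag about non-orthogonality of the modulated blocks is indeed bypassed exactly as you say, by comparing to a suboptimal element of $\mathrm{span}\,\mathcal{D}_*$).

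One arithmetic slip in your last line: from $\mathbb{E}\|\mathbf{x}\|^2=E_0/L$ and $\mathbb{E}\|\mathbf{x}-P_{\mathcal{D}_*}\mathbf{x}\|^2\lesssim \tfrac{\log N}{LK}E_0$ the quotient is $\tfrac{\log N}{K}$, not $\tfrac{L\log N}{K}$. This is harmless for the stated theorem since $\tfrac{\log N}{K}\le \tfrac{L\log N}{K}$; in fact, by exploiting independence of the error terms (rather than only Cauchy--Schwarz) your argument actually yields a bound without the factor $L$. Also, your opening line says you will apply Theorem~\ref{th_main}, but what you actually use---correctly---is Lemma~\ref{lemma_sum_eig}.
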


\begin{proof}
Theorem 4.4 in \cite{DW} gives ${\mathbb{E}\left\{ ||\mathbf{x}||^{2}\right\} }=N$ and
\begin{equation*}
{\mathbb{E}\left\{ ||\mathbf{x}-P_{\mathcal{D}_{\ast }}\mathbf{x}||^{2}\right\} }\leq
\frac{L}{2W}\sum_{k=K}^{N-1}\lambda _{k}(N,W).
\end{equation*}
Since $2NW=\sum_{k=0}^{N-1}\lambda _{k}(N,W)$ and $K=\left\lfloor
2NW\right\rfloor $, we can use Lemma \ref{lemma_sum_eig} to conclude that
\begin{eqnarray*}
\frac{\mathbb{E}\left\{ ||\mathbf{x}-P_{\mathcal{D}_{\ast }}\mathbf{x}||^{2}\right\}
}{\mathbb{E}\left\{ ||\mathbf{x}||^{2}\right\} } &\leq &\frac{L}{K}\sum_{k=K}^{N-1}\lambda _{k}(N,W)
\\
&\leq &L\left( 1-\frac{1}{K}\sum_{k=0}^{K-1}\lambda _{k}(N,W)+\frac{1}{K}\right)  \\
&\lesssim &\frac{L\log N}{K}.
\end{eqnarray*}
\end{proof}

\begin{remark}
Approximation estimates such as the one in Theorem \ref{th_cs}
quantify the rate at which a finite dimensional model captures an analog phenomenon and are
therefore instrumental to the quantification of the so-called stable sampling rate in compressive
sensing \cite{adha16}.
\end{remark}

\begin{remark}
Theorem \ref{th_cs} applies to the critical number of Slepian sequences. Better approximation rates are possible by increasing the number of dictionary elements, see \cite[Section 5]{DW} and \cite[Corollary 3.12]{zw16}.
\end{remark}

\section{Conclusions and outlook}
We provided a quantitative description of the spectral window of Thomson's
multitaper method, leading to MSE bounds. We also quantified in the mean-squared sense the
effectiveness of the dictionary of modulated Slepian functions to capture analog multi-band signals.

An accumulation phenomenon similar to the one in Theorem \ref{th_main}
has been investigated in \cite{AGR} and numerically illustrated in \cite{BB, SimmonsGeoph2008}.
We therefore expect our approach to be applicable to other estimators including the one based on
spherical Slepians \cite{SIAMRev,SimmonsGeoph2008}, and those for non-stationary spectra
\cite{martin1985wigner,hlmakiko00,BB,OW,pfzh,waha07,omto16}.

Eigenvalue estimates for the spectral concentration problem
are available in the context of Hankel bandlimited functions \cite{AB}. We expect these
to be useful for problems involving $2D$ functions whose spectrum lies on
a disk, a setting relevant in cryo-electron microscopy, where estimation of noise stochastics is an
important consideration when applying PCA \cite{zhsi13}.

Dictionaries similar to the one in \eqref{eq_D} also appear in \cite{hola15},
in the context of prolate-spheroidal functions on the line. As shown in
\cite{hola15},
corresponding frame properties are related to a variant of Thomson's spectral window.
For this reason, it would be interesting to obtain a version of Theorem \ref{th_main} for the
$L^{\infty }$-norm.

\section{Technical lemmas}
\label{sec_tech}

\subsection{Trigonometric polynomials and Toeplitz operators}
Our proof uses tools from the Landau-Pollack-Slepian theory
\cite{Slepian,posl61,la67-1,la75-1,lapo61}.
For notational convenience, we use a temporal normalization that is slightly
different from the one in Section \ref{sec_tom} (this has no impact in the
announced estimates). We consider the space of trigonometric polynomials
\begin{equation*}
\mathcal{P}_{N}=Span\left\{ e_{\frac{-N+1}{2}+j}:0\leq j\leq N-1\right\}
\subseteq L^{2}\left( I\right) \text{.}
\end{equation*}
This is a Hilbert space with a reproducing kernel given by the translated
Dirichlet kernel, $\mathbf{D}_{N}(x-y)$, \ $x,y\in I$,\ $N \geq 2$, where $\mathbf{D}_{N}$ is given
by \eqref{Dirichlet}. Note that $\int_{I}\left\vert
\mathbf{D}_{N}\right\vert ^{2}=N$.

For $W\in (-1/2,1/2)$ the \emph{Toeplitz operator} $H_{W}^{N}$ is
\begin{equation}
H_{W}^{N}f:=P_{{\mathcal{P}_{N}}}\left( (P_{\mathcal{P}_{N}}f)\cdot
1_{[-W,W]}\right) ,\qquad f\in L^{2}(I),  \label{eq_toep_op}
\end{equation}
where $P_{{\mathcal{P}_{N}}}$ is the orthogonal projection onto $\mathcal{P}
_{N}$. When $f\in \mathcal{P}_{N}$, $H_{W}^{N}f$ is simply the projection of
$f\cdot 1_{[-W,W]}$ into $\mathcal{P}_{N}$. The Slepian functions $\{U_{k}(N,W):k=0,\ldots ,N-1\}$
are the eigenfunctions of $H_{W}^{N}$ with
corresponding eigenvalues $\lambda_{k}=\lambda_{k}(N,W)$:
\begin{equation}  \label{eq_eigen}
\int_{-W}^{W}\left\vert U_{k}(N,W;\xi )\right\vert ^{2}\,d\xi =\lambda _{k},
\end{equation}
ordered non-increasingly. We normalize the Slepian functions by $\int_{I}\left\vert U_{k}(N,W;\xi
)\right\vert ^{2}\,d\xi =1$.

\subsection{Integral kernels}

The Toeplitz operator $H_{W}^{N}$ from \eqref{eq_toep_op} can be explicitly
described by the formula
\begin{equation*}
H_{W}^{N}f(x)=\int_{I}f(y)K_{W}^{N}(x,y)dy\text{,}
\end{equation*}
where the kernel $K_{W}^{N}(x,y)$ is
\begin{equation}
K_{W}^{N}(x,y)=\int_{[-W,W]}\mathbf{D}_{N}(x-z)\overline{\mathbf{D}_{N}(y-z)}
dz\text{.}  \label{eq_kernel1}
\end{equation}

The operator $H_{W}^{N}$ can be diagonalized as:
\begin{equation}  \label{eq_diag}
H_{W}^{N}f=\sum_{k=0}^{N-1} \lambda _{k}(N,W)\left\langle
f,U_{k}(N,W)\right\rangle U_{k}(N,W)\text{.}
\end{equation}
The diagonalization in \eqref{eq_diag} means that the integral kernel in \eqref{eq_kernel1} can be
written as
\begin{align}  \label{eq_kernel2}
K_{W}^{N}(x,y)=\sum_{k=0}^{N-1} \lambda_{k}(N,W) U_{k}(N,W)(x)\overline{
U_{k}(N,W)(y)}, \qquad x,y \in I.
\end{align}

In particular, taking $x=y=\xi \in I$ yields,
\begin{equation}  \label{eq_aa}
\left( 1_{[-W,W]}\ast \left| \mathbf{D}_{N} \right| ^2\right) (\xi
)=\sum_{k= 0}^{N-1}\lambda_{k}(N,W)\left\vert U_{k}(N,W,\xi )\right\vert ^{2}\text{.}
\end{equation}

\subsection{An approximation lemma}

\begin{lemma}
\label{MainLemma} Let $f:I\rightarrow \mathbb{C}$ an integrable function, of
bounded variation, and supported on $I^{\circ }=(-1/2,1/2)$. For $N\geq 2$,
let
\begin{equation*}
f\ast \left\vert \mathbf{D}_{N}\right\vert ^{2}(x)=\int_{I}f(y)\left\vert
\mathbf{D}_{N}\left( x-y\right) \right\vert ^{2}dy,\qquad x\in I.
\end{equation*}
Then
\begin{equation}
\left\Vert f-\frac{1}{N}f\ast \left\vert \mathbf{D}_{N}\right\vert
^{2}\right\Vert _{L^{1}(I)}\lesssim Var\left( f,I\right) \frac{\log N}{N}.
\end{equation}
\end{lemma}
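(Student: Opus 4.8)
The plan is to exploit the fact that $\frac{1}{N}\abs{\mathbf{D}_N}^2$ is a nonnegative approximate identity on $I$: it has integral $1$ (since $\int_I \abs{\mathbf{D}_N}^2 = N$), it is $1$-periodic and even, and its mass concentrates at the origin at scale $1/N$. The heart of the matter is to quantify the concentration. I would first record the standard pointwise decay $\abs{\mathbf{D}_N(x)} = \abs{\sin N\pi x}/\abs{\sin \pi x} \le 1/\abs{\sin \pi x} \lesssim 1/\abs{x}$ for $x \in I$, so that $\frac{1}{N}\abs{\mathbf{D}_N(x)}^2 \lesssim \frac{1}{N x^2}$ away from $0$. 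From this one gets a tail bound: for $0 < \delta < 1/2$,
\begin{equation*}
\frac{1}{N}\int_{\delta \le \abs{x} \le 1/2} \abs{\mathbf{D}_N(x)}^2\,dx \lesssim \frac{1}{N}\int_{\delta}^{1/2}\frac{dx}{x^2} \lesssim \frac{1}{N\delta}.
\end{equation*}

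Next I would estimate the $L^1$ error using the approximate-identity structure. Since $\int_I \frac{1}{N}\abs{\mathbf{D}_N}^2 = 1$, write
\begin{equation*}
f(x) - \frac{1}{N} f\ast\abs{\mathbf{D}_N}^2(x) = \frac{1}{N}\int_I \big(f(x) - f(x-y)\big)\,\abs{\mathbf{D}_N(y)}^2\,dy,
\end{equation*}
where the difference $f(x)-f(x-y)$ is interpreted with $f$ extended by $0$ outside $I$ (legitimate because $f$ is supported in the open interval $I^\circ$, so no wrap-around jump is created, and the extended function has total variation $\le \mathrm{Var}(f,I)$ on the circle). Taking $L^1(I)$ norms in $x$, applying Fubini, and using the basic translation estimate $\norm{f - f(\cdot - y)}{L^1} \le \abs{y}\,\mathrm{Var}(f)$ for functions of bounded variation (valid for $\abs{y}\le 1/2$, and in general bounded by $2\mathrm{Var}(f)$), one obtains
\begin{equation*}
\left\Vert f - \tfrac{1}{N}f\ast\abs{\mathbf{D}_N}^2\right\Vert_{L^1(I)} \le \frac{\mathrm{Var}(f,I)}{N}\int_I \min\{\abs{y},1\}\,\abs{\mathbf{D}_N(y)}^2\,dy.
\end{equation*}
It remains to bound the last integral by $\lesssim \log N$. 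Split at $\abs{y}=1/N$: on $\abs{y}\le 1/N$ we use $\abs{\mathbf{D}_N(y)}\le N$ and the factor $\abs{y}$, giving $\int_{\abs{y}\le 1/N}\abs{y}\,N^2\,dy \lesssim 1$; on $1/N \le \abs{y}\le 1/2$ we use $\abs{\mathbf{D}_N(y)}^2 \lesssim 1/y^2$, giving $\int_{1/N}^{1/2}\abs{y}\cdot y^{-2}\,dy = \int_{1/N}^{1/2} y^{-1}\,dy \lesssim \log N$. Combining these yields the claimed bound $\mathrm{Var}(f,I)\,\frac{\log N}{N}$.

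The main obstacle is the bookkeeping around the periodic extension and the bounded-variation translation estimate: one must be careful that extending $f$ by zero does not introduce a spurious jump (this is exactly why the hypothesis $\mathrm{supp}(f) \subseteq I^\circ$ is imposed), and that the inequality $\norm{f-f(\cdot-y)}{L^1(I)} \le \abs{y}\,\mathrm{Var}(f,I)$ is applied on the circle with the correct constant. Everything else is the routine approximate-identity splitting above; the logarithm is genuinely sharp and comes from the $\int y^{-1}\,dy$ over the dyadic range $[1/N, 1/2]$, reflecting that $\abs{\mathbf{D}_N}^2/N$ is only barely an approximate identity (its "first moment" diverges logarithmically).
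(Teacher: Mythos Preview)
Your proposal is correct and follows essentially the same approach as the paper: write the error as $\tfrac{1}{N}\int_I (f(x)-f(x-y))\,|\mathbf{D}_N(y)|^2\,dy$, apply the bounded-variation translation estimate $\|f-f(\cdot-y)\|_{L^1(I)}\lesssim |y|\,\mathrm{Var}(f,I)$, and then show $\tfrac{1}{N}\int_I |y|\,|\mathbf{D}_N(y)|^2\,dy\lesssim \tfrac{\log N}{N}$ by splitting at $|y|\sim 1/N$. The only cosmetic differences are that the paper first reduces to smooth $f$ by approximation and uses $|\sin(\pi y)|$ in place of $|y|$ (so that one factor of $\sin(\pi y)$ cancels in $|\mathbf{D}_N(y)|^2$ before estimating), whereas you work directly with BV and use the cruder pointwise bound $|\mathbf{D}_N(y)|^2\lesssim y^{-2}$; both routes give the same $\log N$.
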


\begin{remark}
In the above estimate, $Var(f,I)$ denotes the total variation of $f$ on $I$.
If $f=1_{[-W,W]}$, with $W\in (-1/2,1/2)$, then $Var\left( f,I\right) =2$
and the estimate reads
\begin{equation*}
\left\Vert \mathbf{1}_{[-W,W]}-\frac{1}{N}\mathbf{1}_{[-W,W]}\ast \left\vert
\mathbf{D}_{N}\right\vert ^{2}\right\Vert _{L^{1}(I)}\lesssim \frac{\log N}{N }.
\end{equation*}
\end{remark}

\begin{proof}
By an approximation argument, we assume without loss of generality that $f$
is smooth (see for example \cite[Lemma 3.2]{AGR}). We also extend $f$
periodically to $\mathbb{R}$. Note that this extension is still smooth
because $f|I$ is supported on $I^{\circ }$.

\textbf{Step 1}. Since $f(x+h)-f(x)=\int_{0}^{1}f^{\prime }(th+x)h\,dt$, we
can use the periodicity of $f$ to estimate
\begin{align*}
\left\Vert f(\cdot +h)-f\right\Vert _{L^{1}(I)}& \leq
\int_{0}^{1}\int_{-1/2}^{1/2}\left\vert f^{\prime }(th+x)\right\vert
dx\left\vert h\right\vert dt \\
& =\int_{0}^{1}\int_{-1/2+th}^{1/2+th}\left\vert f^{\prime }(x)\right\vert
dx\left\vert h\right\vert dt \\
& =\int_{0}^{1}\int_{-1/2}^{1/2}\left\vert f^{\prime }(x)\right\vert
dx\left\vert h\right\vert dt=Var(f,I)\left\vert h\right\vert .
\end{align*}
Since $f$ is periodic, the previous estimate can be improved to:
\begin{equation}
\left\Vert f(\cdot +h)-f\right\Vert _{L^{1}(I)}\lesssim Var(f,I)\left\vert
\sin (\pi h)\right\vert ,\qquad h\in \mathbb{R}.  \label{eq_f}
\end{equation}

\textbf{Step 2}. We use the notation $f^{N}:=f\ast \tfrac{1}{N}\left\vert
\mathbf{D}_{N}\right\vert ^{2}$. By a change of variables and periodicity,
\begin{equation*}
f(x)-f^{N}(x)=\frac{1}{N}\int_{-1/2}^{1/2}(f(x)-f(y+x))\left\vert \mathbf{D}
_{N}(-y)\right\vert ^{2}dy.
\end{equation*}
We can now finish the proof by resorting to \eqref{eq_f}:
\begin{eqnarray*}
\left\Vert f-f^N\right\Vert _{L^{1}(I)} &\lesssim &Var(f,I) \frac{1}{N}
\int_{-1/2}^{1/2}\left\vert \sin (\pi y)\right\vert \left\vert \mathbf{D}
_{N}(y)\right\vert ^{2}dy \\
&\lesssim &Var(f,I)\frac{1}{N}\int_{0}^{1/2}\frac{\left\vert \sin (\pi
Ny)\right\vert }{\left\vert y\right\vert }dy \\
&\lesssim &Var(f,I)\frac{1}{N}\left[ 1+\int_{1}^{N/2}\frac{1}{\left\vert
y\right\vert }dy\right] \\
&\lesssim &Var(f,I)\frac{\log N}{N}\text{.}
\end{eqnarray*}
\end{proof}

\subsection{Proof of Theorem {\protect\ref{L1p}}}

\label{sec_prl1p} The result follows immediately from Lemma \ref{MainLemma}
and \eqref{eq_aa}.

\subsection{Proof of Lemma \protect\ref{lemma_sum_eig}}
The main idea of this proof is to directly estimate the average of the critical number of
eigenvalues instead of building on individual estimates. This solves a problem posed in
\cite[Remarks after Theorem 5]{liro08}.

We first note from \eqref{eq_kernel1} that
\begin{equation}
\mathrm{trace}\left( H_{W}^{N}\right)
=\int_{I}K_{W}^{N}(x,x)dx=\int_{[-W,W]}\int_{I}\left\vert \mathbf{D}
_{N}(x-y)\right\vert ^{2}dydx=2NW\text{,}  \label{eq_trace}
\end{equation}
since\ $\int_{I}$\ $\left\vert \mathbf{D}_{N}\right\vert ^{2}=N$. Moreover a
similar calculation gives
\begin{equation*}
\mathrm{trace}\left( H_{W}^{N}\right) ^{2}=\int_{[-W,W]}\int_{I}\mathbf{1}
_{_{[-W,W]}}(y)\left\vert \mathbf{D}_{N}(x-y)\right\vert ^{2}dydx\text{.}
\end{equation*}
Hence we can use Lemma \ref{MainLemma} to conclude that
\begin{align*}
& \mathrm{trace}\left[ \left( H_{W}^{N}\right) -\left( H_{W}^{N}\right) ^{2}
\right] =\int_{-W}^{W}\left[ N\mathbf{1}_{_{[-W,W]}}(x)-\left( \mathbf{1}
_{_{[-W,W]}}\ast \left\vert \mathbf{D}_{N}\right\vert ^{2}\right) (x)\right]
dx \\
\qquad & \leq \int_{I}\left\vert N\mathbf{1}_{_{[-W,W]}}(x)-\left( \mathbf{1}
_{_{[-W,W]}}\ast \left\vert \mathbf{D}_{N}\right\vert ^{2}\right)
(x)\right\vert dx \\
& \leq C\log N,
\end{align*}
for some constant $C$. Using this bound, we estimate:
\begin{align*}
C\log N& \geq \sum_{k=0}^{N-1}\lambda _{k}(1-\lambda
_{k})=\sum_{k=0}^{K-1}\lambda _{k}(1-\lambda _{k})+\sum_{k=K}^{N-1}\lambda
_{k}(1-\lambda _{k}) \\
& \geq \lambda _{K-1}\sum_{k=0}^{K-1}(1-\lambda _{k})+(1-\lambda
_{K-1})\sum_{k=K}^{N-1}\lambda _{k} \\
& =\lambda _{K-1}K-\lambda _{K-1}\sum_{k=0}^{K-1}\lambda _{k}+(1-\lambda
_{K-1})(2NW-\sum_{k=0}^{K-1}\lambda _{k}) \\
& =\lambda _{K-1}K+2NW(1-\lambda _{K-1})-\sum_{k=0}^{K-1}\lambda _{k} \\
& =2NW-\sum_{k=0}^{K-1}\lambda _{k}+\lambda _{K-1}(K-2NW) \\
& \geq K-\sum_{k=0}^{K-1}\lambda _{k}-1.
\end{align*}
Hence,
\begin{equation*}
K-\sum_{k=0}^{K-1}\lambda _{k}\leq C\log N+1\text{.}
\end{equation*}
On the other hand $\sum_{k=0}^{K-1}\lambda _{k}-K\leq 2NW-K\leq 1$.
Therefore, since $N\geq 2$,
\begin{equation*}
\left\vert K-\sum_{k=0}^{K-1}\lambda _{k}\right\vert \lesssim \log N\text{,}
\end{equation*}
and the conclusion follows.

\section*{Acknowledgement}
The authors are very grateful to Radu Balan,
Franz Hlawatsch,
Christian Kase{\ss}, \mbox{Wolfgang} Kreuzer, 
Jo\~ao M. Pereira and Frederik Simons for insightful conversations that helped improve the 
article.

\end{document}